\documentclass[aps, preprint, groupedaddress, showpacs, floatfix]{revtex4}


\usepackage{amsmath, amssymb}

\usepackage[sort&compress]{natbib}
\usepackage{amsthm}
\usepackage{times}

\usepackage[english]{babel}
\usepackage{xy}
\usepackage{graphicx}
\usepackage{epsfig}
\usepackage{subfigure}



\newtheorem{thm}{Theorem}[section]

\newtheorem{lem}[thm]{Lemma}
\newtheorem{prop}[thm]{Proposition}

\newtheorem{defn}[thm]{Definition}
\newtheorem{alg}[thm]{Algorithm}

\newtheorem{rem}[thm]{Remark}

\newcommand{\pquad}{\qquad}





\newcommand{\C}{\mathbb{C}}
\newcommand{\Cn}{\C^n}

\newcommand{\Z}{\mathbb{Z}}

\newcommand{\N}{\mathbb{N}}

\newcommand{\M}{\mathcal{M}}

\newcommand{\Rr}{\mathbb{R}}

\newcommand{\Rn}{\Rr^n}

\newcommand{\Cc}{\mathcal{C}}
\newcommand{\F}{\mathcal{F}}

\newcommand{\Cz}{\C[z]}


\DeclareMathOperator{\FP}{Fix}

\DeclareMathOperator{\sFP}{sFix}

\DeclareMathOperator{\id}{id}
\DeclareMathOperator{\card}{card}
\DeclareMathOperator{\arccot}{arccot}

\newcommand{\Fref}[1]{Figure~\ref{#1}}
\newcommand{\Eref}[1]{Equation~\eqref{#1}}
\newcommand{\eref}[1]{\eqref{#1}}
\newcommand{\Sref}[1]{\S\ref{#1}}
\newcommand{\e}{\varepsilon}


\renewcommand{\e}{\varepsilon}
\newcommand{\ud}{\mathrm{d}}
\newcommand{\ite}[2]{#1^{\circ #2}}
\newcommand{\vphi}{\varphi}

\newcommand{\Ch}{{\hat{\C}}}

\newcommand{\abs}[1]{\left|#1\right|}
\newcommand{\dist}[2]{\left\|#1 - #2\right\|}
\newcommand{\lda}[2]{\rho_{#1}(#2)}
\newcommand{\sset}[1]{\left\lbrace #1\right\rbrace}
\newcommand{\rset}[2]{\left\lbrace #1\ \left|\ #2\right.\right\rbrace}

\newcommand{\set}[2]{\rset{#1}{#2}}

\newcommand{\ind}[1]{_{\mathrm{#1}}}

\newcommand{\alts}[1]{\tilde{#1}}

\newcommand{\xf}{{x^{*}}}
\newcommand{\zf}{{z^{*}}}

\newcommand{\norm}[1]{\|#1\|}
\newcommand{\enorm}{\norm{\,\cdot\,}}
\newcommand{\ball}[2]{B(#1, #2)}
\newcommand{\cball}[2]{\overline{\ball{#1}{#2}}}
\newcommand{\drv}[2]{\ud#1|_{#2}}



\begin{document}

\title{Adapting Predictive Feedback Chaos Control for Optimal Convergence Speed}
\author{Christian Bick${}^{1,2}$}
\author{Marc Timme${}^{1,3}$}
\author{Christoph Kolodziejski${}^{1,4}$}
\address{${}ö1$ Network Dynamics Group, Max Planck Institute for Dynamics and Self-Organization (MPIDS), 37077 G\"ottingen, Germany and Bernstein Center for Computational Neuroscience, G\"ottingen\\
${}ö2$ Institute for Mathematics, Georg--August--Universit\"at G\"ottingen, 37073 G\"ottingen, Germany\\
${}ö3$ Institute for Nonlinear Dynamics, Georg--August--Universit\"at G\"ottingen, 37077~G\"ottingen, Germany\\
${}ö4$ Third Institute of Physics --- Biophysics, Georg--August--Universit\"at G\"ottingen, 37077~G\"ottingen, Germany}
\email{bick@nld.ds.mpg.de}
\pacs{05.45.Gg, 02.30.Yy}

\begin{abstract}
Stabilizing unstable periodic orbits in a chaotic invariant set not
only reveals information about its structure but also leads to various
interesting applications. For the successful application of a chaos
control scheme, convergence speed is of crucial importance. Here we
present a predictive feedback chaos control method that adapts a
control parameter online to yield optimal asymptotic convergence
speed. We study the adaptive control map both analytically and
numerically and prove that it converges at least linearly to a value
determined by the spectral radius of the control map at the periodic
orbit to be stabilized. The method is easy to implement
algorithmically and may find applications for adaptive online control
of biological and engineering systems. 
\end{abstract}

\maketitle



\section{Introduction}

Some chaotic attractors contain infinitely many unstable periodic
orbits. These can be seen as a ``skeleton'' for the chaotic attractor,
therefore revealing important information about the dynamics of the
system itself. By suitable perturbations the stability of these unstable
periodic points can be changed; a control perturbation
renders them stable. Such ``chaos control'' has applications in many fields
\cite{Scholl2007}, including biological \cite{Rabinovich1998} and artificial
neural networks \cite{Steingrube2010, Scholl2010}.

In the last twenty years, different methods for stabilizing unstable
periodic orbits have been suggested. The seminal work by Ott, Grebogi, and
Yorke (OGY) \cite{Ott1990} and its implementations employ arbitrary small
perturbations of a parameter of the system to stabilize a known unstable
periodic orbit of a discrete time dynamical system. A successful
application of the OGY method, however, requires prior knowledge about or
online analysis of the dynamics to determine fixed points and their stability
properties.

A different approach is given by predictive feedback control (PFC) 
\cite{DeSousaVieira1996, Polyak2005} which overcomes this disadvantage.
In this approach the future state of the dynamics calculated from the
current state is fed back into the system to stabilize a periodic orbit.
This feedback control is noninvasive, i.e., the control strength vanishes
upon convergence, and is extremely easy to implement. It is a special
case of a recent effort to stabilize all periodic
points of a discrete time dynamical system \cite{Schmelcher1997, Schmelcher1998}
which is also closely related to nonlinear successive over-relaxation
methods \cite{Yang2010, Brewster1984}. It has been extensively studied
\cite{Diakonos1998, Pingel2000, Pingel2004, Crofts2006, Crofts2009} and
extended \cite{Doyon2002, Davidchack1999} with respect to its original
purpose as a tool for examining the structure of chaotic attractors.

In any real world application, speed of convergence is of crucial
importance. For example, if a robot is controlled by stabilizing
periodic orbits in a chaotic attractor~\cite{Steingrube2010}, the
time it needs to react to a changing environment is bounded by the
time the system needs to converge to a periodic point of a given
period. Hence, in praxis, one desires to tune the control parameter such that
the spectral radius of the unknown periodic point which the system
converges to is minimized. To the best of our knowledge, previous
works on chaos control have not considered convergence speed
while maintaining its simplicity in terms of implementation. Adaptation
of the control parameter has an impact on convergence speed. However, 
existing adaptation mechanisms \cite{Steingrube2010, Lehnert2011} have
two major shortcomings; they do not optimize for speed and, for adaptation
of heuristic nature, may adapt the parameter to regimes where stabilization
fails.

Here we introduce an adaptation method that overcomes
these shortcomings. It adaptively tunes the control parameter online
to achieve optimal asymptotic
convergence speed. This work is organized as follows. In the second
section we review the PFC method and introduce the notation that
will be used throughout the paper. In the third section, we present
the adaptation method and prove its convergence properties. As an
example, the well-known logistic map is studied both analytically and
numerically in Sections~4 and~5 before giving some concluding
remarks.

\section{Preliminaries}

A differentiable map $f: \Rn \to \Rn$ gives rise to a dynamical system
through the evolution equation
\begin{equation}\label{eq:System}
x_{k+1} = f(x_k)
\end{equation}
with $x_k\in \Rn$ for all $k\in\Z$. The sequence $(x_k), k\in\N$ is called
an orbit of the dynamical system with initial condition $x_0$ and if
$\ite{f}{p}(x_k)=x_{k+p}$ for
all $k\geq 0$ we say that the orbit is periodic with period $p$. Here,
\[\ite{f}{p}=\underbrace{f\circ f \circ \cdots\circ f}_{p \text{ times}}\]
denotes the $p$-fold composition
of $f$. Let $\FP(f) := \set{x\in\Rn}{f(x)=x}$ be the set of fixed points,
i.e., periodic points of period one. Note that any periodic orbit is a
fixed point of the $p$-th iterate of the map~$f$ so we will use the
expressions fixed point and periodic orbit interchangeably.

Let $A\subset\Rn$ be a forward invariant subset of $\Rn$ with respect to $f$, i.e., 
$f(A)\subset A$. If periodic points are dense in $A$ and $f$ maps 
transitively, then we call $A$ a \emph{chaotic set}. Julia sets \cite{Milnor2006},
as described below,
are examples of such chaotic sets. Let $\drv{f}{x}$ denote the
derivative of $f$ at $x\in\Rn$ and $\id:\Rn\to\Rn$ the identity
map.

The results of \cite{Schmelcher1998} are now summarized as follows:

\begin{prop}\label{prop:SD}
Suppose $\xf\in \FP(f)$ and the derivative $\drv{f}{\xf}$ and $\drv{f}{\xf}-\id$
are real, nonsingular and diagonalizable. Then there exist parameters
$\mu > 0$ and an orthogonal matrix \mbox{$C_k\in O(n)$} such that $\xf$ is
an attractive fixed point of the map~$g_{\mu}$ obtained from $f$ through the
transformation
\[S(\mu, C_k): f \mapsto \id+\mu C_k(f-\id) =: g_{\mu}.\]
In particular, $S(\mu, C_k)$ preserves the set of fixed points, that is 
$\FP(f)=\FP(g_{\mu}).$
\end{prop}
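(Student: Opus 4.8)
The plan is to analyze the linearization of $g_\mu$ at $\xf$ and choose $\mu$ and $C_k$ so that all eigenvalues of $\drv{(g_\mu)}{\xf}$ lie strictly inside the unit circle; attractivity then follows from the standard Hartman--Grobman / linear stability theorem for fixed points of differentiable maps. Write $M := \drv{f}{\xf}$. Since $S(\mu,C_k)$ acts by $g_\mu = \id + \mu C_k(f-\id)$, the chain rule gives
\begin{equation}\label{eq:lin}
\drv{(g_\mu)}{\xf} = \id + \mu C_k(M - \id).
\end{equation}
By hypothesis $M$ and $M-\id$ are real, nonsingular and diagonalizable. The fixed-point claim $\FP(f) = \FP(g_\mu)$ is immediate and should be dispatched first: if $f(x)=x$ then $(f-\id)(x)=0$ so $g_\mu(x)=x$; conversely if $g_\mu(x)=x$ then $\mu C_k(f-\id)(x)=0$, and since $\mu\neq 0$ and $C_k$ is invertible, $(f-\id)(x)=0$, i.e. $f(x)=x$.

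The substantive step is producing $C_k\in O(n)$ and $\mu>0$ making the spectral radius of the matrix in~\eqref{eq:lin} less than one. First I would recall the key algebraic fact behind \cite{Schmelcher1998}: for any real nonsingular matrix $N$ (here $N = M-\id$) there is an orthogonal $C$ such that $CN$ has all eigenvalues with strictly positive real part. One way to see this: take the polar decomposition $N = QP$ with $Q\in O(n)$ and $P$ symmetric positive definite (possible since $N$ is nonsingular), and set $C := Q^{-1} = Q^\top$; then $CN = P$ is symmetric positive definite, hence has real positive eigenvalues. (When $\det N<0$ one instead gets $P$ positive definite up to a sign adjustment handled by the $O(n)$ freedom — but the polar factor $P=(N^\top N)^{1/2}$ is always positive definite, so $CN=P$ works regardless of sign.) With this choice the eigenvalues $\lambda_j$ of $C_k(M-\id)$ satisfy $\Re\lambda_j>0$; in fact they are the (positive) eigenvalues of $(M-\id)^\top(M-\id))^{1/2}$ wait — more carefully, of $P=((M-\id)^\top(M-\id))^{1/2}$, all real and positive.

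Then I would finish by choosing $\mu$ small enough. The eigenvalues of $\drv{(g_\mu)}{\xf}$ are $1 - \mu\lambda_j$ where $\lambda_j>0$ are real. For each $j$, $|1-\mu\lambda_j|<1$ precisely when $0<\mu<2/\lambda_j$, so any
\[
0 < \mu < \frac{2}{\max_j \lambda_j} = \frac{2}{\rho\big(((M-\id)^\top(M-\id))^{1/2}\big)}
\]
makes every eigenvalue of the linearization lie in $(-1,1)$, hence the spectral radius is $<1$ and $\xf$ is an attractive (linearly stable) fixed point of $g_\mu$. The main obstacle is really just the orthogonal-matrix construction: one must make sure the polar/QR-type argument genuinely lands $C_k(M-\id)$ in a half-plane $\{\Re z>0\}$ for \emph{all} eigenvalues simultaneously and handles the orientation issue, rather than merely diagonalizing; once that is in hand the parameter estimate for $\mu$ is an elementary one-variable computation. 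If $\drv{(f-\id)}{\xf}$ were only assumed diagonalizable (not the stronger polar-decomposition setup), one would instead diagonalize $M-\id = V D V^{-1}$ and need $C_k$ to compensate the non-orthogonality of $V$, which is exactly where the nonsingularity and the full $O(n)$ freedom are used.
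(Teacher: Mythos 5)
The paper itself offers no proof of Proposition~\ref{prop:SD}; it is quoted from Schmelcher and Diakonos, and the polar-decomposition construction you use is indeed the standard way that result is justified, so your overall strategy is the right one (and, as you note, it needs only nonsingularity of $\drv{f}{\xf}-\id$, not diagonalizability). However, as written your argument contains a sign error that makes the constructed $C_k$ \emph{destabilize} the fixed point. You set $N=M-\id$, take the polar decomposition $N=QP$ with $P=(N^{\top}N)^{1/2}$ positive definite, and choose $C_k=Q^{\top}$, so that $C_kN=P$ has eigenvalues $\lambda_j>0$. But then the linearization is
\[
\drv{g_\mu}{\xf}=\id+\mu\,C_kN=\id+\mu P,
\]
whose eigenvalues are $1+\mu\lambda_j>1$ for every $\mu>0$: the spectral radius exceeds one and $\xf$ is repelling. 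Your stated target lemma (``$CN$ has all eigenvalues with strictly positive real part'') is the wrong condition; for $\id+\mu CN$ to have spectrum inside the unit disc for small $\mu>0$ you need the eigenvalues of $CN$ to have strictly \emph{negative} real part. The subsequent formula ``eigenvalues are $1-\mu\lambda_j$'' does not follow from the construction you gave, which is exactly where the inconsistency shows.

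The repair is one sign: take $C_k=-Q^{\top}$, which is still in $O(n)$ (the proposition only asks for orthogonality, not $\det C_k=1$, so your worry about orientation when $\det N<0$ is moot). Then $C_k(M-\id)=-P$ is negative definite, $\drv{g_\mu}{\xf}=\id-\mu P$ has eigenvalues $1-\mu\lambda_j$ with $\lambda_j$ the singular values of $M-\id$, and any $0<\mu<2/\max_j\lambda_j$ puts all of them in $(-1,1)$. Combined with your (correct) verification that $\FP(f)=\FP(g_\mu)$ and the standard fact that spectral radius below one at a fixed point of a $C^1$ map implies local attractivity, this completes the proof.
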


In fact, it can be shown that the number of matrices $C_k$ needed to
stabilize all fixed points of a given map $f$ is quite limited. They
depend on the local stability properties of fixed points and there are
types of fixed points
that can be stabilized for $C_k\in\sset{\pm\id}$. For a given
$\xf\in\FP(f)$ with $\chi_j\in\C$, $j\in\sset{1,\ldots,n}$, being the
eigenvalues of $\ud g_{\mu}|_{\xf}$ we want to denote by
\[\lda{\xf}{\mu} := \max_{j\in\sset{1, \ldots, n}}\sset{\abs{\chi_j}}\]
the spectral radius, i.e., the maximum of the absolute values of the
eigenvalues of the derivative of $g_{\mu}$ at $\xf$. We have
\begin{equation}\label{eq:Derivative}
\ud g_{\mu}|_x=\id + \mu C_k(\ud f|_x-\id).
\end{equation}
for all $x\in\Rn$. In other words, the proposition above ensures the
existence of $\mu$ and $C_k(\xf)$ for a given $\xf\in\FP(f)$ such
that the transformation $S(\mu, C_k)$ gives $\lda{\xf}{\mu} < 1$,
cf. \Fref{fig:SrSD}. Therefore, with these parameters, the
fixed point $\xf$ of $f$ is an attracting fixed point for $g_{\mu}$.

\begin{figure} 
\begin{center}
\includegraphics[scale=1]{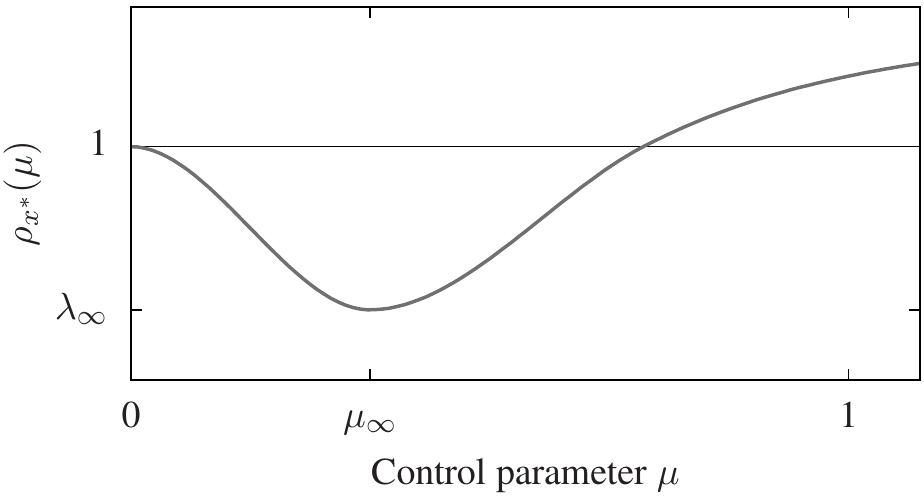}
\end{center}
\caption{\label{fig:SrSD}Sketch of the dependence of the spectral
radius $\lda{\xf}{\mu}$ on $\mu$ for some fixed point $\xf$ according
to Proposition~\ref{prop:SD}.}
\end{figure}

The results above are directly related to predictive feedback chaos
control methods. A transformation $T_\eta: f \mapsto g$ is called
a \emph{chaos control transformation} if $g$ can be written as 
$g=f+\eta c$ with \emph{control perturbation} $c: \Rn\to\Rn$ and
$\eta\in\Rr$. Note that in case $C_k \in \sset{\pm\id}$ the
transformations $S(C_k, \mu)$ are chaos control transformations
since
\[g_{\mu}=f + (1\mp\mu)(\id-f)\]
with $\eta=1-\mu$. Therefore, we will refer to these transformations
$S(\mu, C_k)$ as PFC transformations. Without loss of generality
we restrict ourselves to the case $C_k=\id$.

The results of Proposition~\ref{prop:SD}, however, give little information
about the speed of convergence, except for the fact that when decreasing
$\mu$ towards zero convergence takes longer and longer as the spectral
radius approaches one. In the vicinity of a stabilized fixed point,
convergence is at least linear and the
rate of convergence is bounded from above by the quantity $\lda{\xf}{\mu}$.
In order
to obtain an adaptation method that increases the speed of convergence
we therefore have to minimize $\lda{\xf}{\mu}$ using the control parameter
$\mu$. For a random initial condition, we do not know to what fixed
point $\xf$ (if any) the trajectory will converge to. We only have a
converging sequence $x_k\to \xf$. In other words, we are looking for a
way to obtain a sequence $\mu_k \to \mu_\infty$ where
\begin{equation}\label{eq:muinfty}
\mu_\infty = \sup\set{\mu > 0}{\forall \alts{\mu}>0:\lda{\xf}{\mu} \leq \lda{\xf}{\alts{\mu}}
\begin{array}{l}\text{and assumptions of}\\ \text{Prop.~\ref{prop:SD} are satisfied}\end{array}}
\end{equation}
the optimal $\mu$ to minimize $\lda{\xf}{\mu}$. Define
$\lambda_\infty := \lda{\xf}{\mu_\infty}$.

In applications, the control parameter $\mu$ plays a double role; on the
one hand, it can be used to turn chaos control on and off, $\mu=1$,
on the other hand it is the crucial parameter to stabilize the periodic
orbits and to determine the speed of convergence.

Define the class of functions
\[\F(\mu_0, p) := \set{f}{\card\left(\set{x\in \FP(\ite{f}{p})}{\lda{x}{\mu_0} < 1 \text{ for } C_k = \id}\right)>0},\]
with parameters $p\in\N$, and $\mu_0 > 0$ and where $\card$ denotes the
cardinality of a set. The sets $\F(\mu_0, p)$ are the functions $f$ with
a chaotic set that have at least one periodic orbit of period $p$ which
can be stabilized for the given parameters.


\section{An Adaptation Method to Accelerate Chaos Control}

In this section, suppose $f\in \F(\mu_0, p)$ for some $\mu_0 > 0$ and 
without loss of generality, $p=1$, since we can replace $f$ with the
$p$-th iterate. Suppose $g_{\mu}$ is the transformed map after
applying $S(\mu, \id)$. Furthermore we assume that for all times $k<0$
the system evolves according
to \Eref{eq:System}, i.e., with $\eta = 1-\mu = 0$, along a trajectory
of points in the chaotic set $A$. At time $k=0$ the control parameter~$\mu$
is set to~$\mu_0$. Therefore, because of the assumptions on $f$, there
is at least one periodic orbit of period $p$ on the chaotic attractor
which is now an attracting periodic orbit. Let $\sFP(f)$ denote the set
of these stabilized fixed points.

\subsection{Close to a fixed point}\label{sect:Close}

Recall two facts: any differentiable map $h\in \Cc^1(U)$ on an open set
$U\subset \Rn$ is Lipschitz-continuous on any compact $K\subset U$, that is
\[\dist{h(x)}{h(y)}\leq \norm{\ud h}_K\cdot\dist{x}{y},\]
for all $x,y\in K$ where $\enorm_K$ is the supremum of the operator norms
induced by a norm $\enorm$ on $K$ and $\ud h$ is the total derivative.
Furthermore, for any contraction $h$ on a Banach space $(X,\enorm)$,
i.e., a map  that satisfies
\[\dist{h(x)}{h(y)}\leq L\dist{x}{y}\]
with a Lipschitz constant $L<1$, the Banach Fixed Point Theorem gives
the existence of a unique fixed point $\xf$ together with the error estimates
$\dist{\xf}{x_k} \leq \frac{L^k}{1-L}\dist{x_0}{x_1}$ and
$\dist{x_{k+1}}{x_k} \leq L\dist{x_k}{x_{k-1}}$. Here, $x_k = \ite{h}{k}(x_0)$
for an initial condition $x_0\in X$.

Let $\xf\in\sFP(f)$ be fixed. According to Proposition~\ref{prop:SD}
there exists a $\lambda_0 < 1$ for $\mu = \mu_0$ sufficiently small such
that $\lda{\xf}{\mu_0} < \lambda_0$.
Therefore, there exists a vector norm $\|\ \cdot\ \|$ such that we
have $\norm{\drv{g_\mu}{\xf}}_{\textrm{op}}\leq\lambda_0$ for the induced
operator norm cf. \Fref{fig:Approximations}(a). We will omit the
index indicating the operator norm when it is clear from the context.
Henceforth all norms denote this vector norm and the induced operator
norm respectively.

\begin{figure}[h!] 
\begin{center}
\subfigure[]{\includegraphics[scale=1]{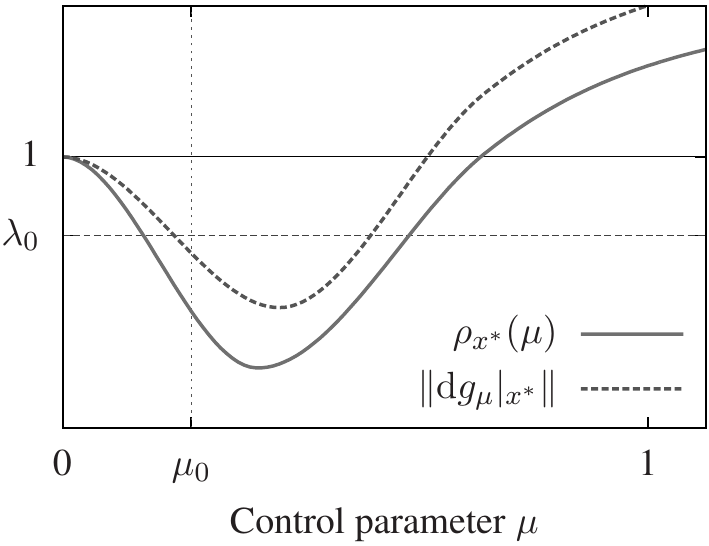}}\quad
\subfigure[]{\includegraphics[scale=1]{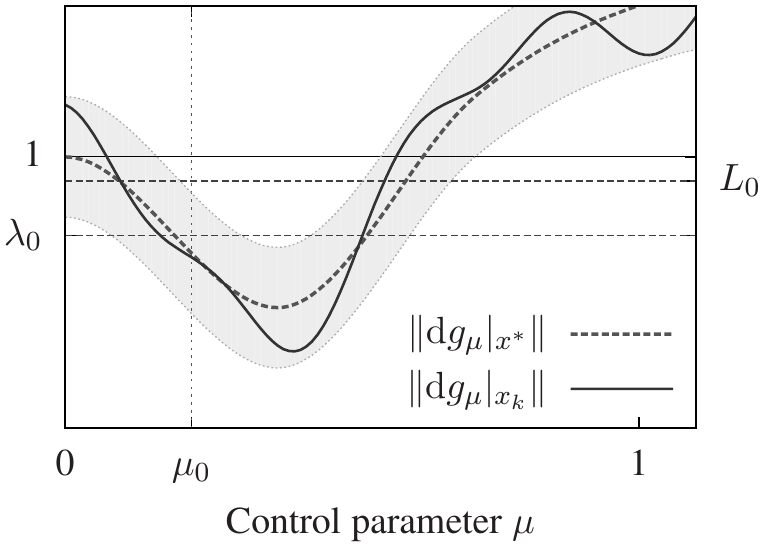}}
\end{center}
\caption{\label{fig:Approximations}(a) An appropriately chosen norm
approximates the spectral radius from above with
$\lda{\xf}{\mu_0}\leq \norm{\drv{g_{\mu_0}}{\xf}} \leq \lambda_0$.
(b) According to Condition~\eref{eq:Cond}, $\norm{\drv{g_{\mu}}{x_k}}$
lies in a $K\dist{\xf}{x_k}$-tube around $\norm{\drv{g_{\mu}}{\xf}}$.
While iterating, this tube becomes smaller and smaller. As another
consequence of~\eref{eq:Cond}, we have $\norm{\drv{g_{\mu}}{x}}_{\cball{\e}{\xf}}$
in a $K\e$-tube around $\norm{\drv{g_{\mu}}{\xf}}$. Therefore, for
$\e$ small enough, there is $L_0<1$ such that 
$\norm{\drv{g_{\mu_0}}{x}}_{\cball{\e}{\xf}}\leq L_0$.}
\end{figure}

Let $\ball{\e}{x}$ denote a ball of radius $\e$ centered at $x$ and
by $\cball{\e}{x}$ its closure. Assume that for $\alts{\e}$
small enough there is a constant $K \geq 0$ such that
\begin{equation}\label{eq:Cond}
\left|\|\ud g_{\mu}|_\xf\| - \|\ud g_{\mu}|_{{x}}\|\right| \leq K\dist{\xf}{{x}}
\end{equation}
for all $x$ with $\dist{\xf}{x} < \alts{\e}$ independent of $\mu$. This
condition is depicted in \Fref{fig:Approximations}(b). Now we can
choose $\e \leq \alts{\e}$ such that 
$\norm{\drv{g_{\mu_0}}{x}}_{\cball{\e}{\xf}} < 1$.
Put differently, for sufficiently small $\delta_0>0$ there exists
an $\e \in (0, \alts{\e})$ such that
$\norm{\drv{g_{\mu_0}}{x}}_{\cball{\e}{\xf}} \leq \lambda_0+\delta_0 =: L_0 < 1$.
The choice of~$\e$ (corresponding to the size of the ball around $\xf$)
depends on $\lambda_0$, $\mu_0$, and $\delta_0$.

\begin{rem}
Condition~\eref{eq:Cond} is satisfied in case $f$ has a bounded second
derivative on $\sFP(f)$ due to the functional dependence of the
derivative of $g_\mu$ on $\mu$ as given by~\eref{eq:Derivative}.
\end{rem}

\begin{alg}\label{alg:Opt}
For given $f\in \F(\mu_0, p), \mu_0, \lambda_0,$ and $K$ let 
$x_0\in\ball{\e}{\xf}$. The convergence acceleration algorithm consists of
the following steps:
\begin{description}
\item[\it Step 1] \emph{(Iterate)}: Calculate $x_1 = g_{\mu_0}(x_0)$.
\item[\it Step 2] \emph{(Optimize $\mu$)}: Minimize the ``cost function''
$\norm{\drv{g_\mu}{x_1}}$ with respect to $\mu\in (0, 1)$ under the conditions
\begin{subequations}
\begin{eqnarray}\label{eq:Opt}
&l(\mu):=\norm{\drv{g_\mu}{x_1}} + \left(\frac{2KL_0}{1-L_0}\right)\dist{x_0}{x_1} < L_0\\
&\mu \text{ maximal}
\end{eqnarray}
\end{subequations}
where $L_0 = \lambda_0 + \delta_0$ as above, cf. \Fref{fig:Optimization}.
\item[\it Step 3] \emph{(Set quantities)}: If the minimization under constraints of
Step~2 returned a result $\mu\ind{opt}$ then set $\mu_1 := \mu\ind{opt},
\lambda_1 := \norm{\drv{g_\mu}{x_1}} +
\left(\frac{KL_0}{1-L_0}\right)\dist{x_0}{x_1}$, 
$\delta_1 := \left(\frac{KL_0}{1-L_0}\right)\dist{x_0}{x_1}$ and
$L_1 := \lambda_1 + \delta_1$. Otherwise set $\mu_1 := \mu_0, \lambda_1 := \lambda_0,
\delta_1 := \delta_0$ and $L_1 := L_0$.
\end{description}
Repeat the steps with all indices increased by one.
\end{alg}

For this method we obtain the following results.

\begin{lem}\label{lem:Conv}
With the assumptions of Algorithm~\ref{alg:Opt} above we have that for any
initial condition $x_0 \in \ball{\e}{\xf}$ Algorithm~\ref{alg:Opt} yields
a trajectory $x_k \to \xf$.
\end{lem}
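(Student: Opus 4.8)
The plan is to set up an induction on $k$ establishing simultaneously that the iterate stays inside $\ball{\e}{\xf}$ and that the bookkeeping quantities $\lambda_k,\delta_k,L_k$ output by the algorithm genuinely bound the derivative of $g_{\mu_k}$ near $\xf$; geometric convergence then falls out at the end. Since the map changes at every step, $x_{k+1}=g_{\mu_k}(x_k)$, the Banach fixed point theorem is not directly applicable, so a uniform ingredient is needed, and the induction hypothesis I would carry is $Q(k)$: ``$x_k$ is defined, $\dist{\xf}{x_k}<\e$, and $\norm{\drv{g_{\mu_k}}{x}}\le L_k$ for all $x\in\cball{\dist{\xf}{x_k}}{\xf}$''. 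The base case is immediate from the setup: $x_0\in\ball{\e}{\xf}$ by assumption, and $\e$ was chosen so that $\norm{\drv{g_{\mu_0}}{x}}\le\lambda_0+\delta_0=L_0<1$ on all of $\cball{\e}{\xf}$, which contains $\cball{\dist{\xf}{x_0}}{\xf}$.

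For the inductive step I would first recall that $g_{\mu_k}(\xf)=\xf$ for every $k$, because $S(\mu,\id)$ preserves fixed points (Proposition~\ref{prop:SD}). Applying the mean value inequality along the segment $[\xf,x_k]\subset\cball{\dist{\xf}{x_k}}{\xf}$, the bound in $Q(k)$ gives $\dist{\xf}{x_{k+1}}\le L_k\dist{\xf}{x_k}$, which, since $L_k\le L_0<1$, already yields $\dist{\xf}{x_{k+1}}<\e$; combined with the triangle inequality it also gives the contraction-type estimate $\dist{\xf}{x_k}\le\tfrac1{1-L_k}\dist{x_k}{x_{k+1}}$, hence the key inequality $\dist{\xf}{x_{k+1}}\le\tfrac{L_k}{1-L_k}\dist{x_k}{x_{k+1}}=\delta_{k+1}/K$, with $\delta_{k+1}$ exactly the number set in Step~3. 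If Step~2 succeeded, Condition~\eqref{eq:Cond} at $x_{k+1}$ then gives $\norm{\drv{g_{\mu_{k+1}}}{\xf}}\le\norm{\drv{g_{\mu_{k+1}}}{x_{k+1}}}+K\dist{\xf}{x_{k+1}}\le\norm{\drv{g_{\mu_{k+1}}}{x_{k+1}}}+\delta_{k+1}=\lambda_{k+1}$, and a second application of \eqref{eq:Cond} at an arbitrary $x\in\cball{\dist{\xf}{x_{k+1}}}{\xf}$ bounds $\norm{\drv{g_{\mu_{k+1}}}{x}}$ by $\lambda_{k+1}+K\dist{\xf}{x_{k+1}}\le\lambda_{k+1}+\delta_{k+1}=L_{k+1}$, which is $Q(k+1)$; moreover the constraint $l(\mu)<L_k$ of Step~2 is precisely $L_{k+1}<L_k$, so the chain $L_k\le L_0<1$ is preserved. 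If Step~2 failed, one has $\mu_{k+1}=\mu_k$ and $L_{k+1}=L_k$, and since $\dist{\xf}{x_{k+1}}\le L_k\dist{\xf}{x_k}\le\dist{\xf}{x_k}$ the ball $\cball{\dist{\xf}{x_{k+1}}}{\xf}$ lies inside $\cball{\dist{\xf}{x_k}}{\xf}$, so $Q(k+1)$ is inherited from $Q(k)$ verbatim. All points touched along the way have distance $<\e\le\alts{\e}$ from $\xf$, so \eqref{eq:Cond} is always legitimate.

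Finally, chaining $\dist{\xf}{x_{k+1}}\le L_k\dist{\xf}{x_k}$ over $k$ gives $\dist{\xf}{x_k}\le\big(\prod_{j<k}L_j\big)\dist{\xf}{x_0}\le L_0^{\,k}\dist{\xf}{x_0}\to0$, so $x_k\to\xf$ (in fact geometrically, which anticipates the ``at least linear'' rate promised for the main theorem). The hard part, I expect, is not any single computation but choosing the strengthened hypothesis correctly --- realizing that the derivative bound must be propagated over the whole ball $\cball{\dist{\xf}{x_k}}{\xf}$, not just at $x_k$ --- and noticing that the seemingly ad hoc constants $\delta_k,\lambda_k,L_k$ in Step~3 are rigged exactly so that $K\dist{\xf}{x_k}\le\delta_k$; that inequality is what lets Condition~\eqref{eq:Cond} convert the pointwise derivative estimate available at $x_k$ into the uniform estimate near $\xf$ that the induction requires. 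The fall-back branch of the algorithm then has to be checked separately, but it causes no trouble.
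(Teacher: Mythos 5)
Your proof is correct and follows essentially the same route as the paper's: the chain of inequalities using Condition~\eqref{eq:Cond} twice together with the a posteriori estimate $\dist{\xf}{x_{k+1}}\le\frac{L_k}{1-L_k}\dist{x_k}{x_{k+1}}$ to show $L_{k+1}<L_k<1$ bounds the derivative on $\cball{\dist{\xf}{x_{k+1}}}{\xf}$ is exactly the paper's central computation. You merely make explicit the induction (and the final geometric estimate) that the paper compresses into ``the same calculation is valid for subsequent optimization steps.''
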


\begin{figure} 
\begin{center}
\subfigure[]{\includegraphics[scale=1]{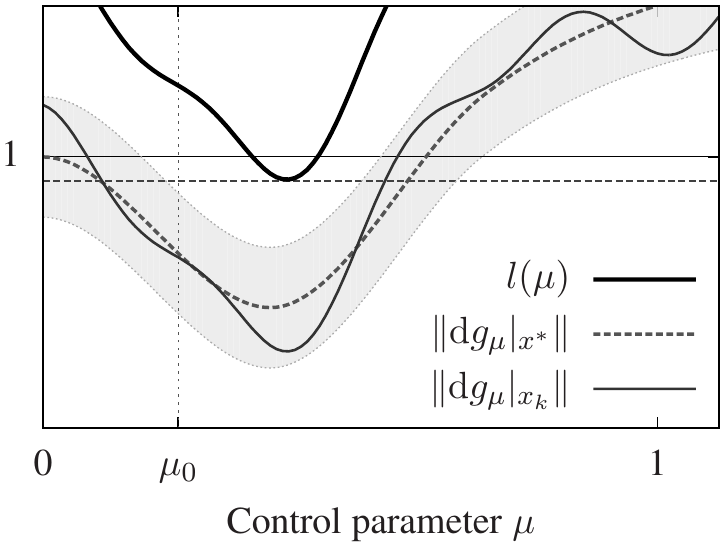}}\quad
\subfigure[]{\includegraphics[scale=1]{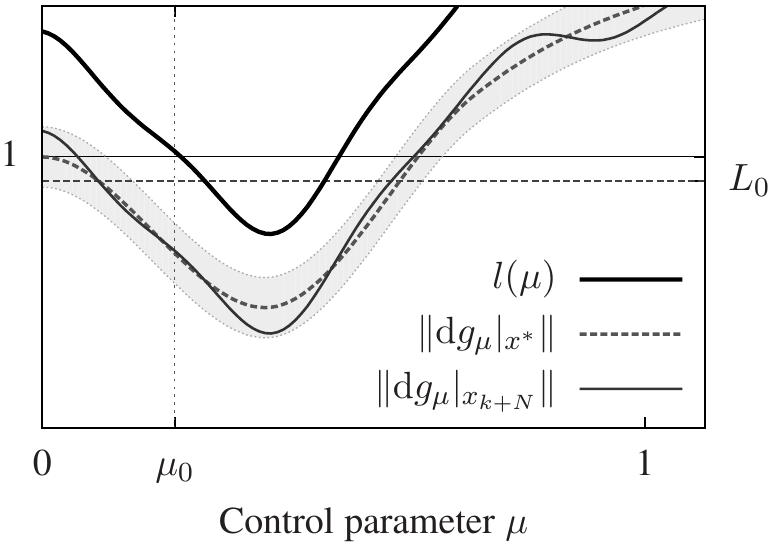}}
\end{center}
\caption{\label{fig:Optimization}(a) Inequality~\eref{eq:Opt} does not
always have to be satisfied. (b) Since $x_k\to \xf$ we have that after a
finite time $N$ the function $l(\mu)$ is below $L_0$ for some $\mu$.}
\end{figure}

\begin{proof}
If the optimization process does not give a result, convergence is ensured
by Proposition~\ref{prop:SD} and the Banach Fixed Point Theorem. Without
loss of generality, suppose that optimization yields a result for $k=1$.
Then because of~\eref{eq:Cond} and~\eref{eq:Opt} we have
\begin{align*}
\norm{\drv{g_{\mu_1}}{\xf}}_{\cball{\dist{x_1}{\xf}}{\xf}} &\leq \norm{\drv{g_{\mu_1}}{\xf}} + K\dist{x_1}{\xf}\\
&\leq \norm{\drv{g_{\mu_1}}{x_1}} + 2K\dist{x_1}{\xf}\\
&\leq \norm{\drv{g_{\mu_1}}{x_1}} + \left(\frac{2KL_0}{1-L_0}\right)\dist{x_1}{x_0}\\
& = L_1 \leq L_0 < 1.
\end{align*}
Therefore, $x_1$ is contained in a ball around the fixed point $\xf$ on
which the map $g_{\mu_1}$ is a contraction with contraction
coefficient $L_1$. The same calculation is valid for subsequent optimization
steps for $k>1$.
\pquad\end{proof}

The lemma above ensures that the adaptation does not compromise
convergence against the stabilized fixed point. But will optimization
actually take place? For a map with $K=0$, adaptation is not necessary
since $\norm{\drv{g_{\mu}}{\xf}}=\norm{\drv{g_{\mu}}{x_k}}$ and
therefore we can set $\mu$ straight to the optimal value.

\begin{lem}\label{lem:Opt}
With the assumptions of Algorithm~\ref{alg:Opt} Inequality~\eref{eq:Opt}
is satisfied every finitely many steps.
\end{lem}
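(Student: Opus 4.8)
The plan is to show that the constraint inequality~\eref{eq:Opt} cannot keep failing forever, because the ``correction term'' in $l(\mu)$ is governed by $\dist{x_k}{x_{k-1}}$, which goes to zero. First I would invoke Lemma~\ref{lem:Conv}: whether or not optimization ever succeeds, the trajectory $x_k\to\xf$, and in particular $\dist{x_k}{x_{k-1}}\to 0$. (If optimization never succeeds, then $g_{\mu_0}$ is applied throughout and the Banach estimates give $\dist{x_{k+1}}{x_k}\leq L_0\dist{x_k}{x_{k-1}}\to 0$; if it succeeds at some steps, the contraction coefficients $L_k\leq L_0<1$ established in Lemma~\ref{lem:Conv} give the same conclusion.) Consequently the term $\left(\frac{2KL_0}{1-L_0}\right)\dist{x_{k-1}}{x_k}$ appearing in $l(\mu)$ tends to $0$ as $k\to\infty$.

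Next I would exhibit, for large $k$, a specific admissible value of $\mu$ at which~\eref{eq:Opt} holds. The natural candidate is $\mu=\mu_0$ (or whatever the ``current'' parameter is): by the defining property of the appropriately chosen norm and Condition~\eref{eq:Cond}, we have $\norm{\drv{g_{\mu_0}}{x_k}}\leq \norm{\drv{g_{\mu_0}}{\xf}} + K\dist{\xf}{x_k}\leq \lambda_0 + K\dist{\xf}{x_k}$. Since $\lambda_0 < L_0 = \lambda_0+\delta_0$, there is a fixed gap $\delta_0>0$. Using $x_k\to\xf$ and $\dist{x_{k-1}}{x_k}\to 0$, choose $N$ so large that for all $k\geq N$ both $K\dist{\xf}{x_k} < \delta_0/2$ and $\left(\frac{2KL_0}{1-L_0}\right)\dist{x_{k-1}}{x_k} < \delta_0/2$. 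Then
\[
l(\mu_0) = \norm{\drv{g_{\mu_0}}{x_k}} + \left(\tfrac{2KL_0}{1-L_0}\right)\dist{x_{k-1}}{x_k} < \lambda_0 + \tfrac{\delta_0}{2} + \tfrac{\delta_0}{2} = L_0,
\]
so the feasible set $\{\mu\in(0,1): l(\mu) < L_0\}$ is nonempty for every $k\geq N$; hence the constrained minimization in Step~2 returns a result and~\eref{eq:Opt} is satisfied at step $k$. This is the picture sketched in \Fref{fig:Optimization}(b): the curve $l(\mu)$ drops below the level $L_0$ once the tube has shrunk enough.

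Finally I would argue the ``every finitely many steps'' phrasing: once optimization succeeds at step $k\geq N$, the quantities $\lambda_{k}, \delta_{k}, L_{k}$ are reset to values built from $\dist{x_{k-1}}{x_k}$, which are again small, so the new threshold $L_{k}$ is still bounded below by $\lambda_\infty$ (or at worst stays $\leq L_0$), and the same estimate reruns at step $k+1$ with $N$ replaced by $k$. Thus after time $N$ optimization in fact succeeds at \emph{every} step, which is stronger than the stated conclusion. The main obstacle is the bookkeeping of the moving constants $L_k,\delta_k$: one must check that resetting them in Step~3 does not inadvertently push the threshold down faster than the correction terms shrink, i.e., that the gap between $\norm{\drv{g_\mu}{x_k}}$ (minimized over $\mu$) and $L_k$ does not collapse. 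This is controlled precisely because the correction terms are $O(\dist{x_{k-1}}{x_k})$ while the spectral-radius curve has a genuine minimum $\lambda_\infty$ bounded away from $1$, so I would make that comparison explicit and conclude.
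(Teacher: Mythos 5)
Your proposal is correct and follows essentially the same route as the paper: both exhibit the current parameter $\mu_0$ as a feasible witness, use the fixed gap between $\norm{\drv{g_{\mu_0}}{\xf}}$ and $L_0$ together with $\dist{x_k}{x_{k-1}}\to 0$ and $\norm{\drv{g_{\mu_0}}{x_k}}\to\norm{\drv{g_{\mu_0}}{\xf}}$ to force $l(\mu_0)<L_0$ after finitely many steps, and then repeat the argument inductively after each reset. Your closing stronger claim that optimization succeeds at \emph{every} step beyond $N$ is not needed for the lemma and is left unverified, but the part you do establish matches the paper's proof.
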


\begin{proof}
By definition we have $\norm{\drv{g_{\mu_0}}{\xf}} \leq L_0 = \lambda_0
+\delta_0$ with $\delta_0>0$. Hence we have $\norm{\drv{g_{\mu_0}}{\xf}} < L_0$.
Let $\zeta > 0$ be such that $\zeta < L_0 - \norm{\drv{g_{\mu_0}}{\xf}}$.
Since $\dist{x_k}{x_{k+1}}$
is a Cauchy sequence and $\norm{\drv{g_{\mu_0}}{x_k}} \to
\norm{\drv{g_{\mu_0}}{\xf}}$ there is an $N \in \N$ such that
$\left(\frac{2KL_0}{1-L_0}\right)\dist{x_{N}}{x_{N-1}} < \frac{\zeta}{2}$
and $\abs{\norm{\drv{g_{\mu_0}}{\xf}} - \norm{\drv{g_{\mu_0}}{x_{N}}}} < \frac{\zeta}{2}$.
Thus, we have
\[\norm{\drv{g_{\mu_0}}{x_{N}}} + \left(\frac{2KL_0}{1-L_0}\right)\dist{x_{N}}{x_{N-1}}
< \norm{\drv{g_{\mu_0}}{\xf}} + \frac{\zeta}{2} + \frac{\zeta}{2}
< \norm{\drv{g_{\mu_0}}{\xf}} + \zeta < L_0.
\]
Therefore, Inequality~\eref{eq:Opt} will be satisfied after maximally $N=N_0$
steps.

Inductively, by increasing all indices above by $N$, the same argument
gives a sequence $N_l$, $l\in\N$, of indices for which Inequality~\eref{eq:Opt}
is satisfied. This completes the proof of the assertion.
\pquad\end{proof}

\begin{rem} 
Although the adaptation method gives a sequence $\mu_k$ that minimizes the
norm while ensuring convergence, it is not clear how often optimization
yields a result. Additional conditions on the map $f$, such as
requiring monotonicity of
$\norm{\drv{g_\mu}{x_k}}$ in $x_k$, influence how often the parameter
$\mu$ will be adapted. On the other hand, additional constraints make
the theory less broadly applicable.
\end{rem}

If Inequality~\eref{eq:Opt} is satisfied for some $k>0$, then, because
of continuity, it holds for a whole closed neighborhood of $\mu_k$.
This gives $\mu_{k+1}$ with $\norm{\drv{g_{\mu_{k+1}}}{\xf}} < 
\norm{\drv{g_{\mu_{k+1}}}{\xf}}$ unless $\norm{\drv{g_{\mu}}{\xf}}$ is
constant on that interval.

\begin{defn}[\cite{Householder1959}]
A matrix norm $\enorm$ on $\Rr^{n\times n}$ is called minimal
for $M\in\Rr^{n\times n}$ if $\rho(M)=\norm{M}$.
\end{defn}

The main results of this section can now be summarized in the
following theorem.

\begin{thm}\label{thm:Main}
Suppose $f\in\F(\mu_0, p)$ for $\mu_0>0$ such that $f$ satisfies~\eref{eq:Cond}
with $K \geq 0$ in a neighborhood of $\xf\in\sFP(f)$. Furthermore, let 
$\e, \lambda_0$, and $\delta_0$ be chosen as described above. Then
for any initial condition $x_0\in\ball{\e}{\xf}$ 
Algorithm~\ref{alg:Opt} minimizes an upper bound for the spectral
radius $\lda{\xf}{\mu}$.

In particular,
if the induced operator norm
$\|\ \cdot\ \|_{\textrm{op}}$ is minimal for $\drv{g_{\mu_\infty}}{\xf}$,
it converges at least linearly with asymptotic convergence
speed~$\lambda_\infty$.
\end{thm}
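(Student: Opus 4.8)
\emph{Overall plan.} I would prove the two assertions in turn. The first one --- that Algorithm~\ref{alg:Opt} minimizes an upper bound for $\lda{\xf}{\mu}$ --- should follow almost formally from Lemmas~\ref{lem:Conv} and~\ref{lem:Opt} once one checks that the bookkeeping quantity $\lambda_k$ set in Step~3 dominates $\lda{\xf}{\mu_k}$. To see the latter I would chain three elementary facts: any induced operator norm dominates the spectral radius, so $\lda{\xf}{\mu_k}=\rho(\drv{g_{\mu_k}}{\xf})\le\norm{\drv{g_{\mu_k}}{\xf}}$; Condition~\eref{eq:Cond} gives $\norm{\drv{g_{\mu_k}}{\xf}}\le\norm{\drv{g_{\mu_k}}{x_k}}+K\dist{\xf}{x_k}$; and, since by Lemma~\ref{lem:Conv} the map $g_{\mu_{k-1}}$ contracts with constant $L_{k-1}$ on a ball around $\xf$ containing $x_{k-1}$, the a~posteriori Banach estimate gives $\dist{\xf}{x_k}\le\left(\frac{L_{k-1}}{1-L_{k-1}}\right)\dist{x_{k-1}}{x_k}$, which is exactly the correction term defining $\lambda_k$. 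Hence $\lda{\xf}{\mu_k}\le\lambda_k$. As the correction term added to the cost function in~\eref{eq:Opt} is independent of $\mu$, minimizing $\norm{\drv{g_\mu}{x_k}}$ in Step~2 is the same as minimizing $\lambda_k$; by Lemma~\ref{lem:Opt} this minimization is carried out for infinitely many indices, which together with $x_k\to\xf$ (Lemma~\ref{lem:Conv}) gives the first assertion.

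\emph{Reduction of the asymptotic-rate statement.} Assume now $\|\ \cdot\ \|_{\textrm{op}}$ is minimal for $\drv{g_{\mu_\infty}}{\xf}$, so that $\norm{\drv{g_{\mu_\infty}}{\xf}}=\rho(\drv{g_{\mu_\infty}}{\xf})=\lda{\xf}{\mu_\infty}=\lambda_\infty$. Since any operator norm dominates the spectral radius, $\norm{\drv{g_\mu}{\xf}}\ge\lda{\xf}{\mu}\ge\lambda_\infty$ for every $\mu>0$, so the continuous function $\mu\mapsto\norm{\drv{g_\mu}{\xf}}$ attains its minimum $\lambda_\infty$ at $\mu_\infty$. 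I claim it suffices to show $L_k\to\lambda_\infty$. Indeed, Lemma~\ref{lem:Conv} shows that $g_{\mu_k}$ contracts with constant $L_k$ on a ball centered at $\xf$ and containing $x_k$, so $\dist{x_{k+1}}{\xf}\le L_k\dist{x_k}{\xf}$; then $\dist{x_k}{\xf}\le\left(\prod_{j=0}^{k-1}L_j\right)\dist{x_0}{\xf}$, and $L_j\to\lambda_\infty$ forces $\limsup_k\dist{x_{k+1}}{\xf}/\dist{x_k}{\xf}\le\lambda_\infty$, that is, at-least-linear convergence with asymptotic speed $\lambda_\infty$.

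\emph{The core argument: $L_k\to\lambda_\infty$.} The sequence $(L_k)$ is non-increasing: when Step~2 returns a result one has $L_k=l(\mu_k)$, so the constraint $l(\mu)<L_{k-1}$ reads precisely $L_k<L_{k-1}$, and otherwise $L_k=L_{k-1}$; moreover $L_k\ge\lambda_k\ge\lda{\xf}{\mu_k}\ge\lambda_\infty$, so $L_k\downarrow\ell$ for some $\ell\ge\lambda_\infty$. Suppose, toward a contradiction, $\ell>\lambda_\infty$. Using~\eref{eq:Cond} and the a~posteriori estimate,
\[
l(\mu_\infty)=\norm{\drv{g_{\mu_\infty}}{x_k}}+\left(\frac{2KL_{k-1}}{1-L_{k-1}}\right)\dist{x_{k-1}}{x_k}\le\lambda_\infty+K\dist{\xf}{x_k}+\left(\frac{2KL_{k-1}}{1-L_{k-1}}\right)\dist{x_{k-1}}{x_k},
\]
and the right-hand side tends to $\lambda_\infty$ because $x_k\to\xf$ and $\dist{x_{k-1}}{x_k}\to 0$; hence for all large $k$, $l(\mu_\infty)<\tfrac12(\lambda_\infty+\ell)<\ell\le L_{k-1}$, so $\mu_\infty$ is admissible in Step~2. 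Since Step~2 returns a result for infinitely many indices $k$ (Lemma~\ref{lem:Opt}), at every such large $k$ the algorithm, minimizing $\norm{\drv{g_\mu}{x_k}}$ over admissible $\mu$, gives $\norm{\drv{g_{\mu_k}}{x_k}}\le\norm{\drv{g_{\mu_\infty}}{x_k}}\le\lambda_\infty+K\dist{\xf}{x_k}$; then $L_k=\norm{\drv{g_{\mu_k}}{x_k}}+2\delta_k\le\lambda_\infty+K\dist{\xf}{x_k}+2\delta_k\to\lambda_\infty$ along this subsequence, and since $L_k\downarrow\ell$ this forces $\ell\le\lambda_\infty$, a contradiction. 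Hence $\ell=\lambda_\infty$, as required.

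\emph{Main obstacle.} The one genuinely technical point is the bookkeeping of the Banach estimates: consecutive iterates are produced by \emph{different} maps $g_{\mu_{k-1}},g_{\mu_k}$, so both the a~posteriori bound on $\dist{\xf}{x_k}$ that feeds $\lambda_k$ and the a~priori product bound used for the rate must be applied on the correct nested balls supplied by Lemma~\ref{lem:Conv}, while~\eref{eq:Cond} is needed to compare norm values at $x_k$ and at $\xf$ uniformly in $\mu$. Everything else --- existence of $\mu_\infty$ with the properties in~\eref{eq:muinfty}, the persistence of the hypotheses of Proposition~\ref{prop:SD} for the $\mu_k$ (immediate since $\lambda_k\le L_0<1$), and continuity of $\mu\mapsto\drv{g_\mu}{x}$ --- is routine.
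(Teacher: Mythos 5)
Your proof is correct and follows the same route as the paper: Lemmas~\ref{lem:Conv} and~\ref{lem:Opt}, the chain $\lda{\xf}{\mu_k}\le\norm{\drv{g_{\mu_k}}{\xf}}\le\lambda_k$, and minimality of the norm at $\drv{g_{\mu_\infty}}{\xf}$. The only difference is that the paper dispatches the crucial step --- that $\mu_k$ tends to a norm-minimizing value --- with the phrase ``by construction,'' whereas you actually establish it via the monotonicity of $L_k$ and the eventual admissibility of $\mu_\infty$ in Step~2; your version supplies the details the paper omits.
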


\begin{rem}
For dimension $n=1$, the Euclidean norm is minimal.
\end{rem}

\begin{proof}(of Theorem~\ref{thm:Main})
Lemmas~\ref{lem:Conv} and \ref{lem:Opt} ensure convergence against
the fixed point $\xf$ and adaptation of the control parameter $\mu$
after a maximum of some finite number of steps.

By construction, $\mu_k$ tends to a value which minimizes the norm of
the derivative of $g_\mu$ at $\xf$. For arbitrary dimension $n$ we have
$\lda{\xf}{\mu_k}\leq \norm{\drv{g_{\mu_k}}{\xf}}$. If in addition
the norm is minimal, in the limit the spectral radius is minimized
yielding
optimal asymptotic convergence speed, i.e., $\mu_k\to\mu_\infty$.
\pquad\end{proof}

\begin{rem}
One could also use convergence acceleration transformations
\cite{Smith1987} in order to get a better approximation to 
$\lda{\xf}{\mu}$. However, to exploit the acceleration within the
framework of this theory, one would have to have suitable error
estimates for the transformed sequence.
\end{rem}

The choice of the size neighborhood $\ball{\e}{\xf}$ depends on the
desired estimate of the contraction constant. It is clearly bounded
from above since we have to make sure that there is a contraction. On
the other hand, it is desirable to take a neighborhood as large as
possible to make the method applicable to as many initial conditions
as possible.

\subsection{From local to global}\label{sect:Global}

We want to consider the situation where the control is turned on at a
random point in time. We choose indices such that this time is $k=0$.
In general, the initial condition $x_0$ for the adaptation method is
unknown and it is likely to be outside of a neighborhood 
$\ball{\e}{\xf}$ as defined in the section above. One possible scenario
is the existence of a chaotic attractor that makes up part of the old
chaotic attractor and ending up in its basin of attraction when the
control parameter
is turned up. Another likely scenario is to have $x_0$ close to the
boundary of the basin of attraction of one of the stabilized fixed
points. An initial condition close to the basin boundary implies a
long transient iteration before the adaptation method becomes
applicable.

We want to quantify the latter scenario. Since we assumed the system
to follow the evolution equation $x_k=f(x_{k-1})$ for all $k\leq 0$,
$x_0$ is distributed on the attractor according to some $f$-invariant
measure $m$ on $A$. Suppose $m$ is an ergodic probability measure 
on~$A$. So $m(U)$ is the probability that $x_0 \in U$
at time $k=0$ for any measurable subset $U\subset A$.

Let $\ball{\e(\xf)}{\xf}$ denote the neighborhoods of $\xf\in\sFP(f)$
for which the acceleration method described in the previous section
is applicable for all initial conditions within that neighborhood.
Because of $f\in \F(\mu_0, p)$ at least one of these balls is not
empty. Define
\[V_0 = V = \left(\bigcup_{x\in \sFP(f)}\ball{\e(x)}{x}\right)\cap A\]
to be the part of the union of all these neighborhoods on the attractor.
Thus, if $V$ is measurable, $m(V)$ is a lower bound for the
probability that the adaptation method described above converges if the
parameter $\mu$ is set to $\mu_0$ at a random point in time and
dynamics evolved on the chaotic set before.
Furthermore, we define $V_k := \bigcup_{l\leq k}\ite{g_{\mu_0}}{(-l)}(V)$.
Now $P_k=m(V_k)$ is a lower bound on the probability that the algorithm
will converge after letting the transformed system evolve for $k$ time
steps after being initialized with $\mu=\mu_0$ at time $k=0$.

As $k \to \infty$ tends to infinity the set $V_k$ will converge to the union of the
basins of attraction of the stabilized fixed points. Hence, we obtain
a function \[\vphi(\mu_0) := \lim_{k\to\infty}m(V_k)\] 
depending on the initial parameter~$\mu_0$. The value 
$\liminf_{\mu_0\to 0}\vphi(\mu_0) = \alts{\vphi}$ for some
$\alts{\vphi}\in [0,1]$ determines the size of the basin of attraction of
the stabilized fixed point.


\section{Adaptive PFC for the Logistic Map}

As an example, we apply the PFC transformation to the logistic
map given by the quadratic polynomial $\ell_r(x)=rx(1-x)$ with the real
parameter $1<r \leq 4$. It is well known that there are parameter
values for which the dynamics is chaotic on some subset $A$ of the
unit interval $I=[0, 1]$. In particular, for $r=4$ the whole unit
interval is a chaotic set. Here, we study the period one orbits;
higher periods can be treated similarly.

\subsection{Calculating the Adaptation Parameters}

First, we want to calculate the quantities for the adaptation
method described
in \Sref{sect:Close}. The second derivative of $\ell_r$ exists
everywhere on $\Rr$ and is bounded on compact subsets. Within this section
let $h'$ denote
the derivative of a differentiable function $h$. Here, we treat the
cases $C_k\in\sset{\pm\id}$ simultaneously by allowing the control
parameter $\mu$ for the transformed function
\[g_{\mu, r}(x) = S(\mu, C_k)(\ell_r)(x) = x+\mu(\ell_r(x)-x)\]
to range within the interval $[-1, 1]$. For $\mu=1$ we obtain the
original system and around $\mu=0$ either, where $C_k=\id$ when
$\mu$ is positive and $C_k=-\id$
when $\mu$ is negative.

Since $\abs{g_{\mu, r}''(x)}=\abs{\mu}\abs{\ell_r''(x)}=2r\abs{\mu}$
for all $x\in I$,
the maximum in $\mu$ is taken for $\abs{\mu}=1$. Therefore, if we set
$K=8$, we obtain a constant independent also of the parameter~$r$ and
the sign of~$C_k$.

The two fixed points of $f_r$ are $\xf=0$ and $\xf=\frac{r-1}{r}$. The
derivatives at the fixed points are $g_{\mu, r}'(0)=1+\mu(r-1)$ and 
$g_{\mu, r}'(\frac{r-1}{r})=1-\mu(r-1)$. Hence, $\xf=0$ is stable for $\mu$
negative ($C_k=-\id$) and $\xf=\frac{r-1}{r}$ for $\mu$ positive
($C_k=\id$). To apply the adaptive method, the initial parameters
need to be determined as in \Sref{sect:Close}: for a given
$\mu_0$ the bound $\lambda_0$ can be calculated directly from the
derivative. Furthermore, we have to find $\e$ that defines a neighborhood
of $\xf$ for the initial condition $x_0$ and a given
initial $\mu_0$. From the local stability and~\eref{eq:Cond} we obtain
that convergence is ensured if
\begin{subequations}\label{eq:LoAll}
\begin{eqnarray}
\lambda_0 = 1+\abs{\mu_0}(r-1) &>-1\label{eq:Lo},\\
K\e-\abs{\mu_0}(r-1) &< 0,\\
K\e+\abs{\mu_0}(r-1) &< 2\label{eq:Lo2}
\end{eqnarray}
\end{subequations}
For either $\xf$ this gives $\abs{\mu_0}<\frac{2}{r-1}$. This results
in a bound for the size of the neighborhood of $\xf$ in which the map
is a contraction, \[\e<\min\left\{\frac{2-\mu_0(r-1)}{K}, 
\frac{\mu_0(r-1)}{K}\right\}.\] The optimal bound $\e<\frac{1}{K}$
is achieved for $\mu_0(r-1)=1$.

It is desirable to choose $\e$ as large as
possible (to cover as many initial conditions as possible) while
keeping the whole expression on the left hand side of~\eref{eq:Lo}
inequality as small as possible (a smaller contraction constant
$L_0$ leads to stronger contraction). This choice depends on the
initial guess $\mu_0$.

\begin{rem}
The chaotic set $A$ depends on the choice of the parameter $r$, so
we obtain a family of chaotic sets $A_r$. Note that we do not
necessarily have $0\in A_r$ or $\frac{r-1}{r} \in A_r$. We have
$A_4=I$ so that the two fixed points are contained in $A_4$.
Otherwise, for a given fixed point $\xf$, $\e$ has to be chosen
large enough such that $A_r\cap \ball{\e}{\xf} \neq \emptyset$.
\end{rem}

The constant parameters $K=8$ and $\mu_0, \lambda_0$, and $\e$ as
given by \eref{eq:LoAll} together with an
approximation of the measure on $A$ are the basis of the calculations of
lower bounds for the probability of convergence in the following
section.

\subsection{PFC on the complex plane}

The quadratic polynomial defining the logistic map can also be
seen as a polynomial over the complex numbers. Iteration of complex
polynomials is a classical example in complex analytic dynamics 
and the theory developed there can tell us something about the
effect of the PFC transformation $S(\mu, C_k)$ for $C_k\in\sset{\pm\id}$.
Here, this geometric point of view allows us to calculate the
full basin of attraction of the stabilized fixed points. In particular,
we also obtain convergence for general, complex valued initial
conditions in a neighborhood of the periodic orbits in the complex
plane.

Recall some notions from one-dimensional complex dynamics \cite{Milnor2006}.
Suppose $f:\C\to\C$ is holomorphic. A point $z\in\Ch = \C\cup\{\infty\}$
is said to be in the \emph{Fatou set $F(f)$} if there is an open
neighborhood of $z$ on which the family of iterates 
$\set{\ite{f}{k}}{k\in\N}$ is normal. Its
complement is called the \emph{Julia set $J(f)$} and constitutes
the boundary of all Fatou components that contain any stable periodic
orbits. Both of these sets are forward and backward invariant with
respect to the map $f$. The Julia set is a chaotic set in our sense.
Henceforth, we denote the complex variable by $z$.

Let $f\in\Cz$ be a complex polynomial. Note that the result of the
PFC transformation $g_{\mu, r}$ again is a polynomial of the same degree
in the complex variable $z$ unless $f$ is constant or $\mu=0$. The
logistic map is defined as a polynomial of degree two. The dynamics
of quadratic polynomials are conjugate to the dynamics of a polynomial
$z^2+c$ where $c$ is a complex parameter. The parameter $c$ can be
characterized in terms of the orbit of the only finite critical point
$z=0$ (since $(z^2+c)'(0)=0$), where the points for which that orbit
is bounded constitute the \emph{Mandelbrot set $\M$}. For $c\in\M$
there can be bounded Fatou components corresponding to the basin of
attraction of a stable periodic orbit.

The logistic family described above is conjugate to the subset 
$\left[-2, \frac{1}{4}\right]$ of the intersection of $\M$ with the
real axis. Since $g_{\mu, r}$ again
is a real quadratic polynomial of degree two for $\ell_r$ and these
polynomials only keep the real axis invariant if $c\in\Rr$, every
$g_{\mu, r}$ is conjugated to a quadratic polynomial $z^2+c$ with
real $c$. For given~$r$, the relationship between this complex
parameter $c$ and the control parameter $\mu$ is given by
\[c_r(\mu) = \frac{1}{4}\left(1-\mu^2(r-1)^2\right)\]
for $\mu\neq 0$. Hence, varying the parameter $\mu$ results in a
``shift'' of the dynamics up the real axis until it approaches
$c=\frac{1}{4}$ as $\mu\to 0$. From the equation above, one can
also see that the dynamics of $g_{\mu, r}$ are conjugated for $\mu=1$ and
$\mu=-1$, the former case corresponding to the unperturbed
system.

\begin{figure} 
\begin{center}
\includegraphics[scale=1]{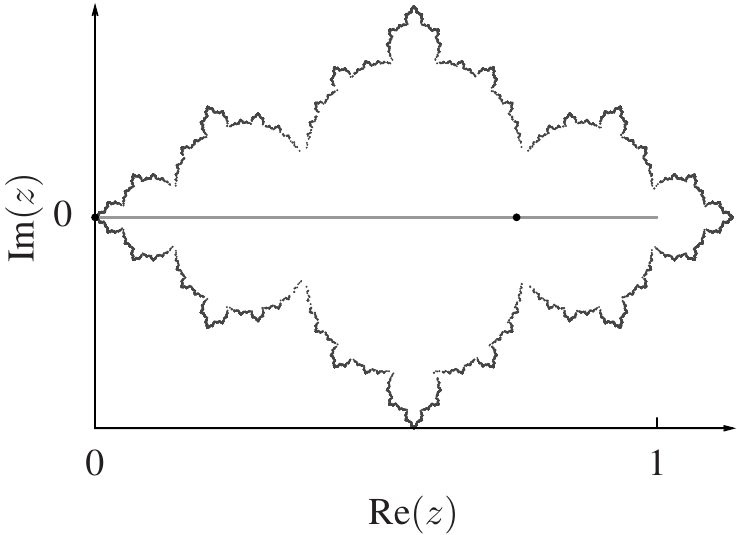}\qquad
\includegraphics[scale=1]{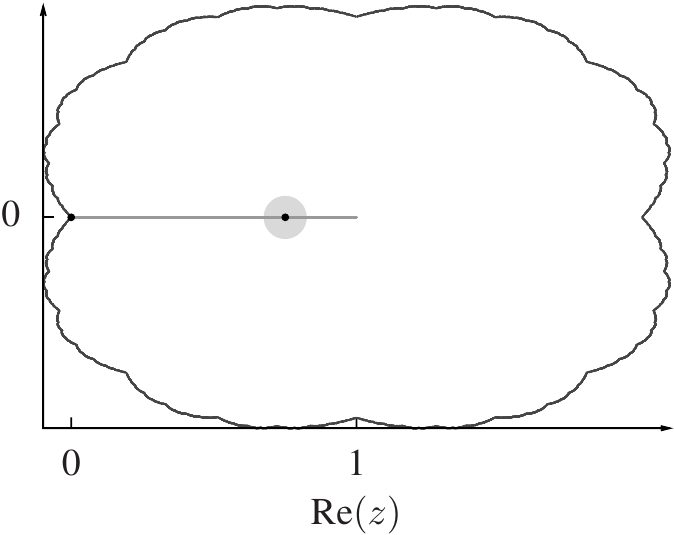}
\end{center}
\caption{\label{fig:JuliaId}Julia sets for parameters $\mu=-0.65$ (left)
and $\mu=-0.2$ (right) for $C_k=-\id$. The Julia set of $g_{\mu, 4}$ is depicted
in dark gray whereas the Julia set of the original map, i.e., the unit interval,
is depicted in medium gray. The fixed points $\zf=0$ and $\zf=\frac{3}{4}$
are marked by black dots. Shaded circles indicate $\ball{\e}{\zf}$ for
$\mu_0=\mu$.}
\end{figure}

What does stabilization of fixed points mean in terms of complex
analytic dynamics? An unstable fixed point is contained in the
Julia set. The goal of stabilization is to turn this fixed point
into a stable one, i.e., that it now belongs to a bounded Fatou
component. That is, the transformation should deform the
Julia set in a way that it does not contain the targeted
periodic point anymore.

Let us consider the case $r=4$ in more detail. The Julia set
$J(\ell_{r=4}) = I$ is equal to the whole unit interval. The probability
distribution $m$ is given by a beta distribution with parameters
both equal to $\frac{1}{2}$ (cf.~for example \cite{Diakonos1996}),
i.e., with probability density function
\[p(x)=\left(\pi x^{\frac{1}{2}}(1-x)^{\frac{1}{2}}\right)^{-1}.\]

Suppose $C_k=\id$ and $\mu$ small enough. In this case
$\zf=\frac{3}{4}$ is the stabilized fixed point. In the
previous section we calculated the maximum size of the ball around
the fixed point for which the adaptation method works straight
away. This radius is given by $\e<\frac{1}{8}$ and
\begin{align*}P_0 = m(V_0) &\leq m(\ball{\e}{\zf}\cap I) = m\left(\left[\frac{5}{8}, \frac{7}{8}\right]\right) = \int_{\frac{5}{8}}^{\frac{7}{8}}\frac{\ud x}{\pi x^{\frac{1}{2}}(1-x)^{\frac{1}{2}}}\\ &= \frac{2}{\pi}\left(\arctan\left(\sqrt{\frac{3}{5}}\right)-\arccot\left(\sqrt{7}\right)\right)\approx 0.1895.
\end{align*}
In \Fref{fig:JuliaId}, one can see that the whole unit
interval is contained in the bounded Fatou component. Backward
iteration takes this set closer to the boundary of this Fatou
component. This means that, for $\mu_0$ small,
\[\vphi(\mu_0) = 1,\]
$P_0 \leq P_k=m(V_k) \leq 1$ and $\alts{\vphi} = 1$.
Therefore, trajectories will converge to the stabilized
periodic point with probability one.

\begin{figure} 
\begin{center}
\includegraphics[scale=1]{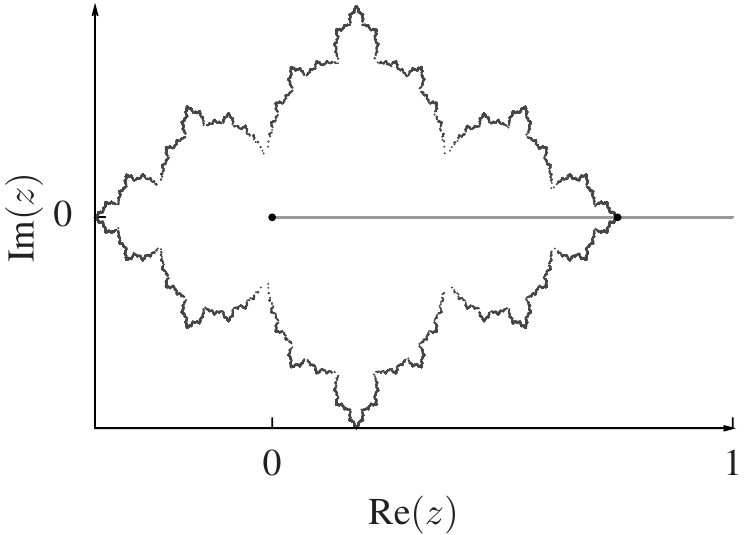}\qquad
\includegraphics[scale=1]{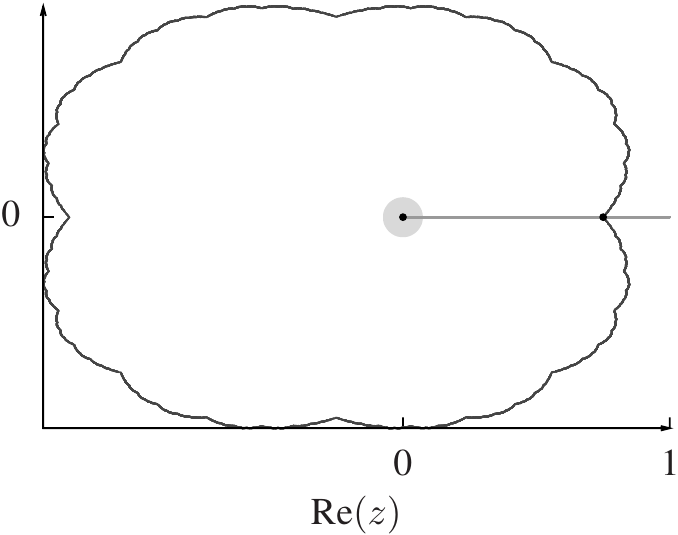}
\end{center}
\caption{\label{fig:JuliaMId}Julia sets for parameters $\mu=-0.65$ (left)
and $\mu=-0.2$ (right) for $C_k=\id$. The Julia set of $g_{\mu, 4}$ is depicted
in dark gray whereas the Julia set of the original map, i.e., the unit interval,
is depicted in medium gray. The fixed points $\zf=0$ and $\zf=\frac{3}{4}$
are marked by black dots. Shaded circles indicate $\ball{\e}{\zf}$ for
$\mu_0=\mu$.}
\end{figure}

The picture is slightly different for $C_k=-\id$ and $\mu$ small
enough. Now, $\zf=0$ is the stabilized fixed point. Again
we have $\e<\frac{1}{8}$ and therefore
\begin{align*}P_0=m(V_0) &\leq m(\ball{\e}{\zf}\cap I) = m\left(\left[0, \frac{1}{8}\right]\right) = \int_{0}^{\frac{1}{8}}\frac{\ud x}{\pi x^{\frac{1}{2}}(1-x)^{\frac{1}{2}}}\\ &= \frac{2}{\pi}\arccot\left(\sqrt{7}\right)\approx 0.2301.
\end{align*}
In this case backward iteration yields a different result as can
be seen in \Fref{fig:JuliaMId}. Part of the set of initial
conditions $A$ is in the basin of attraction of infinity and the
intersection with the Julia set is exactly the fixed point
$\zf=\frac{3}{4}$. Therefore, the probability to converge
with a random initial condition is less than one. Integrating the
probability density function gives 
\[\vphi(\mu_0) = \int_{0}^{\frac{3}{4}}\frac{\ud x}{\pi x^{\frac{1}{2}}(1-x)^{\frac{1}{2}}} = \frac{2}{3}.\]
Therefore, we have $\alts{\vphi} = \frac{2}{3}$ and $P_0 \leq m(V_k)
\leq\frac{2}{3}$. In contrast to the case $C_k=\id$ this means that
for $C_k=\id$ and $\zf=0$ a trajectory with an initial condition on
$I$ distributed according to $m$ will diverge with a probability of
one third.

When considering higher periods of such a polynomial map, the
Julia sets are more complicated as the degree of the iterated
polynomial rises exponentially. The situation changes qualitatively
when considering the predictive feedback control dynamics of higher
dimensional maps by interpreting them as functions $\Cn\to\Cn$.
In general, the dynamics of holomorphic, higher-dimensional maps
is more diverse since even low-dimensional invertible maps give
rise to complicated dynamics \cite{Hubbard1994}.


\section{Numerical Results}

To compare the speed of the adaptive method (ACC) with the original
PFC chaos control in a real world application, 
we performed numerical
simulations for the logistic map~$\ell_4$. The results for $C_k=+\id$,
$\mu_0\in[0, 1]$ and periods one and two are summarized in 
\Fref{fig:Numerics}. One can clearly see that for most initial
values of the control parameter, the adaptive method yields an
increase in convergence speed. The results for $C_k=-\id$ and period
one are similar but the convergence probability is lower (not shown)
in accordance with the results of the previous section, 
cf.~\Fref{fig:JuliaMId}. In the case of period $p=2$, the orbits
stabilized by negative~$\mu$ are the period one orbits, which is
reflected in
our numerical results (not shown). A non-optimized, ad hoc choice of
parameters for the adaptive method of $K=8$ and $L_0=0.99$ (independent
of the initial condition) were employed in the simulations. The criterion
for convergence time~$T$ was given by $\abs{x_{T} - x_{T-1}}\le 10^{-10}$ but
reliability was determined after checking for the correct period.

\begin{figure}
\begin{center}
\includegraphics[scale=1]{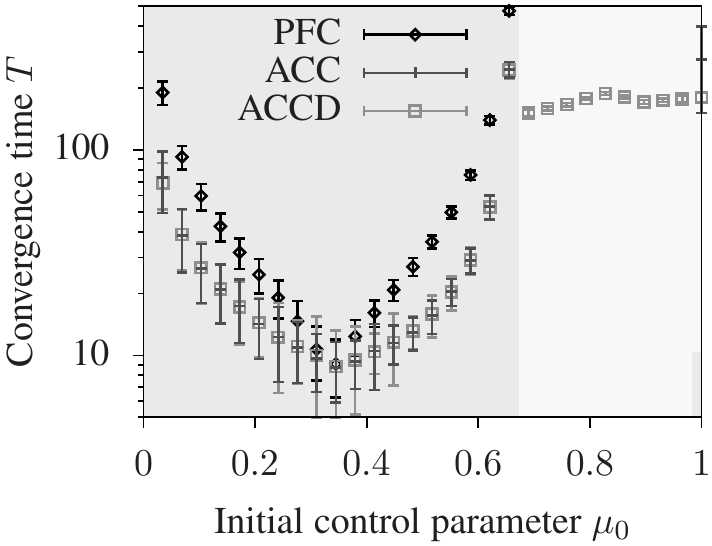}\quad
\includegraphics[scale=1]{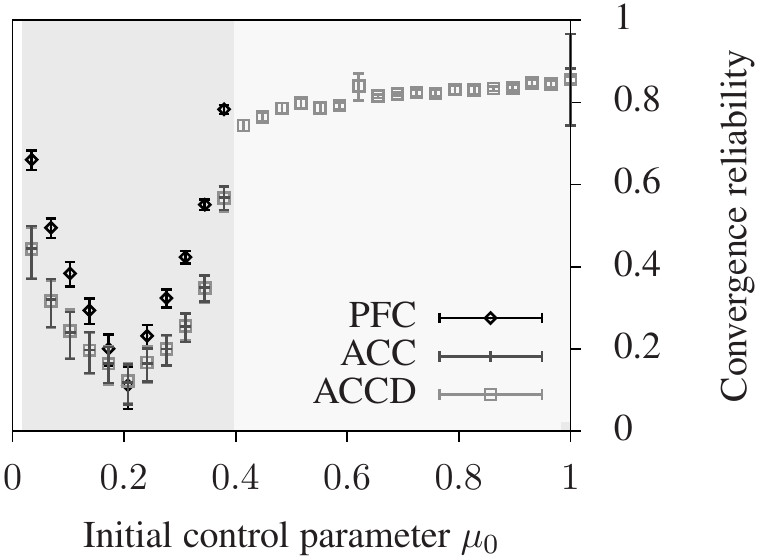}
\end{center}
\caption{\label{fig:Numerics}Speed and reliability comparison of
the original PFC chaos control ($\mu=\mu_0$ fixed) and both ACC and 
ACCD for the logistic map~$\ell_4$ with $p=1$ (left) and $p=2$ (right).
Times are only plotted if more than $1\%$ of initial values that
lead to convergence to the correct period. Gray shading indicate the
convergence reliability; dark gray corresponds to all methods
converging, light gray to the reliability of ACCD. Convergence
time~$T$ is given by $\abs{x_{T} - x_{T-1}}\le 10^{-10}$ calculated for
$1000$ random initial conditions after a transient of random length.}
\end{figure}

The convergence reliability, i.e., the fraction of trials where the
above criterion is fulfilled after some time $T$, is not improved by
the adaptive method. However, it is possible to amend the adaptation
method to lead to convergence for most initial
conditions~$x_0$ within the convergent regime, independent of the
initial value of the control parameter~$\mu_{0}$ (the modified
method is denoted by ACCD). When adapting,
the ACC method has to check whether Criterion~\eref{eq:Opt}
is fulfilled. If this is not the case after $M$ iterations, the
modified method simply reduces~$\mu$ to a certain fraction~$v$.
To prevent~$\mu$ of becoming too small, we imply a threshold~$\theta$
below which~$\mu$ cannot decay. Put in other words, the modified
method ACCD will automatically decrease $\mu$ towards zero to
reach the
convergence regime if Inequality~\eref{eq:Opt} is not satisfied
within a given number of steps.

The modified method ACCD behaves like the original ACC method
for initial values of $\mu_0$ in the convergent regime while leading
to convergence outside of it, cf. \Fref{fig:Numerics} (here
$M=50$, $v=0.7$, $\theta=0.1$ for period $p=1$, and $\theta=0.05$
for period~$p=2$). Failure of convergence that is due to
the existence of a range of diverging initial conditions, however,
will persist, even with the decay. The results are similar for a
broad parameter range (e.g., decay rate \mbox{$v\in[0.65, 0.99]$} and
decay kick-in time \mbox{$M\in[10, 100]$}). For a decay rate too
close to $100\%$ or a too large decay kick-in time,
it will take many iterations to reach the convergent interval. On
the other hand, if the decay kick-in time is too small or does not
exist at all, $\mu_k$ decreases even if it is in the convergent
interval as Criterion~\eref{eq:Opt} is not fulfilled all the time,
unnecessarily increasing convergence time.

\section{Discussion}

Here we presented a method which adapts the control parameter
$\mu$ of the PFC method in order to accelerate the systems convergence
to a periodic orbit. In contrast to heuristically chosen methods,
our adaptation does not compromise convergence. The algorithm 
converges in a
neighborhood of every periodic orbit that was stabilized by the
stabilizing transformation. Assuming the existence of an
invariant, ergodic probability measure on the chaotic attractor, we obtain
an analytic bound for the probability $P_k$ that the system
converges to a periodic orbit if
the chaos control is switched on at an arbitrary point in time.
Although these results are stated in the framework of discrete time
dynamical systems, they can also be applied to stabilize continuous
time systems after discretization such as taking Poincar\'e
sections. The logistic map provides an example for which we can
calculate the parameters for the method. We estimated the
probability of convergence and highlighted its dependence on the
fixed point to be stabilized and the associated matrix~$C_k$.

Our method was stated in the general context of ``chaotic
sets.'' In general, such sets do not need to be local or even
global attractors of the dynamical system. In fact, the Julia sets
considered in the example are repelling rather than attracting. In
applications,
however, an attractor would be desirable such that the process of
stabilization becomes repeatable. That is, after the control
perturbation is turned off by choosing the appropriate value for the
control parameter, the dynamics would return to the attractor
and the process can be started over again.

Apart from its role as a chaos control method, the estimates
described in \Sref{sect:Global} give information on the
PFC method itself. It allowed us to calculate the size of the
basin of attraction for varying $\mu$ in our example. Decreasing
$\mu$ always leads to slower convergence since the eigenvalues
converge to one
as \mbox{$\mu \to 0$}. So is it possible to find an optimal
$\mu_0$ for a given map? Since any adaptation method increases
the computational cost of the chaos control method, \emph{a priori}
estimates of such crucial quantities are of importance. Furthermore,
the choice of
the stabilization matrix $C_k$ depends on the type of fixed points in
the chaotic attractor. Hence, global statistics for a given map $f$ of
the periodic orbits and their stability properties might yield some
\emph{a priori} estimates.

Our numerical studies suggest that it is possible to get reliable
convergence without \emph{a priori} knowledge of the exact values
for the parameters. A slight modification of the method yields
a hybrid method that finds the regime of control parameter
in which the dynamics converge online and then adapt it to the
optimal value. This simplification, however, comes at a cost in
convergence
speed. By definition, PFC cannot distinguish between a periodic
orbit of period $p$ and any $q|p$, a divisor of $p$. Our numerical
calculations, however, indicate that this does not influence
reliability of the chaos control method. This is most likely
caused by the exponential growth of the number of periodic orbits.
In the future, it would be desirable to add a mechanism that 
rigorously distinguishes between the target period and its divisors
to prove optimal convergence.

An adaptation method for chaos control is a step towards solving
the intuitively contradictory problem of optimizing speed while
maintaining simplicity in implementation. However, as discussed
above, it leads to further challenging research questions that
have to be addressed in the future.

\section*{Acknowledgements}

The authors would like to thank 
Eckehard Sch\"oll for helpful discussions and
Christoph Kirst for valuable comments on
the manuscript.
CB would like to thank Laurent Bartholdi for
making this project possible.
This work was supported by the Federal
Ministry of Education and Research (BMBF) by grant numbers 01GQ1005A
and 01GQ1005B.


\bibliographystyle{unsrt}
\bibliography{ChaosControl}

\end{document}